\documentclass[12pt]{article}
\usepackage[margin=1.2in]{geometry}

\usepackage{url,amsmath,amssymb,amsthm,natbib,appendix,bbm,mathtools,booktabs,caption,makecell,float,enumerate,scalerel,enumitem}
\usepackage[dvipsnames]{xcolor}
\usepackage{pgf,tikz,pgfplots,mathrsfs,graphicx}
\usepackage{hyperref,cleveref}
\usetikzlibrary{arrows}
\theoremstyle{plain}
\newtheorem{definition}{Definition}
\newtheorem{thm}{Theorem}
\newtheorem{prop}{Proposition}

\theoremstyle{definition}

\theoremstyle{remark}

\usepackage{chngcntr,apptools}
\AtAppendix{\counterwithin{lem}{section}}
\AtAppendix{\counterwithin{prop}{section}}
\AtAppendix{\counterwithin{axm}{section}}

\def\R{\mathbb{R}}
\def\D{\mathcal{D}}
\def\E{\mathbb{E}}

\title{On the (Non-)Uniqueness of Random Non-Expected Utility}
\author{Yi-Hsuan Lin\thanks{Institute of Economics, Academia Sinica; \href{mailto:yihsuanl@econ.sinica.edu.tw}{yihsuanl@econ.sinica.edu.tw}. I am deeply grateful to Larry Epstein for his advice and encouragement.}}
\date{July 2026}
\begin{document}
\maketitle\begin{NoHyper}
\abstract{In random expected utility \citep*{Gul:2006}, the distribution of preferences is uniquely identified from random choice. This paper investigates whether such identification extends beyond expected utility. We first show that when risk preferences conform to the disappointment aversion model of \cite{Gul:1991}, the distribution of preferences remains uniquely identified. To assess the scope of this result, we then examine other models of non-expected utility. Within the broader class of betweenness preferences \citep*{Dekel:1986}, random utility can be unidentifiable. If preferences are confined to the weighted expected utility class \citep{Chew:1983}, a more nuanced picture emerges: unique identification holds in a three-prize setting but fails with four or more prizes. These findings show that the uniqueness property of random expected utility may persist beyond expected utility, but its persistence critically depends on the class of risk preferences under consideration.}

\par\ \\
{\bf Keywords:} random choice, random utility, random preference, identification, non-expected utility

\section{Introduction}
A fundamental question in decision theory is whether choice behavior uniquely identifies the primitives of a behavioral model. In deterministic settings, this question concerns the recovery of a single preference relation and, when applicable, the uniqueness of its representation. In random choice environments, however, the object of interest is no longer a single preference but a distribution over preferences. Random utility theory provides a natural framework for random choice. Under the individual interpretation, choices appear random because a decision maker's preference varies with an unobserved state. Under the population interpretation, choices appear random because of heterogeneity in preferences across individuals. In either interpretation, random choice is generated by a distribution over preferences, and identification asks whether this distribution is uniquely determined by the observed choice frequencies. In general, the answer is negative: distinct distributions over preferences may generate identical random choice behavior \citep{Fishburn1998}.
\par
A notable exception is the random expected utility model, where choice alternatives are lotteries and all preferences conform to expected utility theory. Under this restriction, the distribution of preferences is uniquely recoverable from random choice \citep{Gul:2006}. The focus on expected utility is natural as a first step, but is not completely satisfactory in light of its well-known descriptive failures, such as the Allais paradox. Experimental evidence suggests that individuals not only differ in their risk attitudes but also in the extent to which they depart from expected utility. If random choice reflects such heterogeneity, a natural question is whether the underlying distribution of non-expected utility preferences can also be uniquely identified.
\par
We address this question in the context of lotteries over a finite set of monetary prizes and within the class of betweenness preferences introduced by \cite{Dekel:1986}. Two notable subclasses are disappointment aversion preferences \citep{Gul:1991} and weighted expected utility \citep{Chew:1983}. Betweenness preferences generalize expected utility by retaining linear indifference curves while allowing them to be non-parallel. This structure permits accommodation of Allais-type behavior without sacrificing much of the analytical tractability of expected utility. Betweenness preferences therefore provide a natural framework for studying identification beyond expected utility.
\par
Our findings provide both positive and negative answers to this question. First, when preferences are confined to the disappointment aversion model, the distribution over preferences remains uniquely identified from random choice. Disappointment aversion preferences are arguably a minimal departure from expected utility, introducing only one additional parameter to capture attitudes toward disappointment and elation. The model nevertheless retains much of the analytical structure of expected utility. Our result shows that this additional flexibility does not come at the cost of identification: random choice and preference heterogeneity remain in one-to-one correspondence.
\par
This positive result, however, does not extend to the broader class of betweenness preferences. We show by example that unique identification fails when preferences are restricted to the betweenness class and are monotone with respect to first-order stochastic dominance. Non-uniqueness even obtains within the class of weighted expected utility, another parsimonious subclass of betweenness that admits a finite-dimensional parameterization. Interestingly, unique identification still holds for weighted expected utility in a three-prize setting, but fails once four or more prizes are available.
\par
Taken together, two broader lessons emerge from these findings. On the positive side, unique identification of the distribution over preferences survives in an economically important class of non-expected utility. Random choice can fully reveal the distribution of a decision maker's latent attitudes toward both risk and disappointment, and also pin down the heterogeneity of such attitudes across individuals. On the negative side, the boundary between uniqueness and non-uniqueness is surprisingly subtle. The weighted expected utility example demonstrates that neither finite-dimensional parameterization nor axiomatic and geometric closeness to expected utility is sufficient to guarantee unique identification of random non-expected utility. Understanding what gives rise to uniqueness remains an important direction for future research.
\par
The rest of the paper is organized as follows. The remainder of Section 1 reviews the related literature. Section 2 introduces the framework for random choice and random utility and defines the classes of risk preferences under study. Section 3 presents the main identification results for random non-expected utility. The appendix contains most of the proofs.  

\paragraph{Literature Review}
\ \par\ 
Since the pioneering work of \cite{BlockMarschak1960}, random utility theory has served as a standard rational model of stochastic choice. However, as demonstrated by \cite{Fishburn1998}, a well-known difficulty in this framework is that the underlying distribution over preferences is generally not uniquely identified from stochastic choice data. Uniqueness may nevertheless be restored when the structure of the choice environment permits economically meaningful restrictions on preferences. For example, \cite{single-crossing-RUM} considers an ordered space of choice alternatives and assumes that preferences satisfy single-crossing. \cite{YANG2023} considers consumption goods with monetary costs and restricts preferences to be quasi-linear. In the context of choice under risk, \cite{Gul:2006} establish unique identification when preferences are restricted to expected utility. 
\par
Similar identification results have also been established for richer models of decision under uncertainty. In the Anscombe-Aumann framework, \cite{Lu:2016} shows that random subjective expected utility is uniquely identified from stochastic choice data. \cite{Lu:2020} further extends this uniqueness result to random maxmin expected utility. While maxmin expected utility is introduced to accommodate ambiguity-sensitive behavior highlighted by the Ellsberg paradox, many departures from expected utility are motivated by violations of the independence axiom, as illustrated by the Allais paradox. Prominent examples include weighted expected utility \citep{Chew:1983}, betweenness \citep{Dekel:1986}, and disappointment aversion \citep{Gul:1991}. The present paper investigates whether similar identification results for random utility extend when risk preferences are restricted to these non-expected utility models.
\par
In purely abstract choice settings where preference restrictions are less natural, alternative approaches to the identification problem have been developed. \cite{SULEYMANOV2024} considers a subclass of random utility representations termed branching-independent and shows that every random choice function consistent with random utility can be generated by a unique branching-independent distribution over preferences. \cite{TURANSICK2022} provides necessary and sufficient conditions on random choice data for which the random utility representation is unique. Finally, unique identification may obtain when choice data is further enriched. For example, \cite{DuttaMasatliogluMu2026} adopt novel choice data, called random binary choice paths, and show that it is rich enough to uniquely identify the underlying preference distribution.

\section{The Model}
This paper focuses on random choice under risk. There is a finite set of monetary prizes denoted $X=\{x_0,x_1,\cdots,x_N\}\subseteq\R$, where $N\geq2$ and $x_0<x_1<\cdots<x_N$. The objects of choice are lotteries over $X$. A lottery $p$ is identified with the vector $(p(x_1),\cdots,p(x_N))\in\R^N$, where for each $n=1,\cdots,N$, $p(x_n)$ denotes the probability of receiving prize $x_n$, and $p(x_0)\coloneqq1-\sum_{n=1}^Np(x_n)$ denotes the probability of receiving prize $x_0$. Where the context is clear, we write $p_n$ for $p(x_n)$.  The set of all lotteries, denoted by $\Delta$, is identified with the probability simplex in $\R^N$: $\Delta\coloneqq\{p\in\mathbb{R}_+^N:\sum_{n=1}^Np_n\leq 1\}$. Finally, with a slight abuse of notation, the degenerate lottery that assigns probability one to prize $x\in X$ is also denoted by $x$.
\par
A choice set, also called a menu, is a finite set of lotteries. Let $\mathcal{D}$ denote the collection of all menus. Choice from a menu is modeled as a random set. Specifically, for any $A,D\in\mathcal{D}$, let $\rho(A,D)$ denote the probability that $A$ is the set of all optimal alternatives from menu $D$. To ensure feasibility, we require that $\rho(A,D)>0$ only if $A$ is a nonempty subset of $D$. The observable choice behavior is summarized by a random choice correspondence, defined as follows.\footnote{We consider multi-valued random choice to avoid dealing with ties. The literature often studies single-valued random choice, where choice from a menu $D$ is summarized by a probability distribution over $D$, rather than over the collection of subsets of $D$. To accommodate single-valued random choice, a random utility model requires either that ties occur with probability zero, or that a tie-breaking rule be specified. Otherwise, the connection between the behavior and the model is not tight. Multi-valued random choice avoids these issues by treating the entire set of utility-maximizing alternatives as the observable outcome.}
\begin{definition}
A random choice correspondence (RCC) is a function $\rho:\mathcal{D}\times\mathcal{D}\to[0,1]$ such that (i) for all $A,D\in\D$, $\rho(A,D)>0$ implies $\emptyset\neq A\subseteq D$, and (ii) for all $D\in\mathcal{D}$, $\sum_{A\subseteq D}\rho(A,D)=1$. 
\end{definition}
\par
A preference $\succsim$ over lotteries is a complete and transitive binary relation on $\Delta$. As usual, let $\succ$ and $\sim$ denote the strict part and the symmetric part of $\succsim$, respectively. Since prizes are monetary, we assume throughout that preferences are monotone with respect to first-order stochastic dominance.\footnote{That is, $p\succsim q$ whenever $p$ first-order stochastically dominates $q$, and $p\succ q$ whenever the dominance is strict. The monetary setting provides a natural justification for imposing this monotonicity, which is used in the proofs of our positive identification results.}
\par
Given a set $\mathcal{P}$ of preferences, a random preference $\pi$ is a probability distribution over $\mathcal{P}$. We say that an RCC $\rho$ is rationalized by random preference $\pi$ if for all $A\subseteq D\in\D$,
$$\rho(A,D)=\pi\bigl\{\succsim\in\mathcal{P}:A=\{p\in D:p\succsim q\ \forall\ q\in D\}\bigr\}.$$
A function $V:\Delta\to\R$ represents preference $\succsim$ if $p\succsim q\Leftrightarrow V(p)\geq V(q)$ for all $p,q\in\Delta$. If every $\succsim\in\mathcal{P}$ is represented by a function $V(\cdot|\alpha)$ parameterized by $\alpha$, where distinct values of $\alpha$ correspond to distinct preferences, then a random preference may equivalently be identified with a distribution over the parameter space.
\par
The core question addressed in this paper is one of identification: can a random choice correspondence be rationalized by more than one random preference? The answer depends crucially on the collection of preferences over which random preferences are defined. If $\mathcal P$ is unrestricted, then, by an argument similar to the classic non-identification example in \citet{Fishburn1998}, one can construct distinct random preferences that rationalize the same random choice correspondence. In contrast, if $\mathcal P$ consists only of expected utility preferences, then random choice uniquely identifies random preference, as shown by \citet{Gul:2006}. This paper studies whether such uniqueness continues to hold for classes of non-expected utility preferences that accommodate classic Allais-type violations of expected utility.

\paragraph{Betweenness preferences and their subclasses}
\ \par\ 
A classic model of choice under risk is expected utility. A preference $\succsim$ is an expected utility (EU) preference if there exists a utility function $u:X\to\R$ such that $V(p)=\sum_{x\in X}u(x)p(x)$ represents $\succeq$. Where the context is clear, we write $u_n$ for $u(x_n)$. By monotonicity of $\succeq$, $u$ is strictly increasing. Moreover, $u$ is unique up to positive affine transformations and thus can be normalized so that $u(x_0)=0$ and $u(x_N)=1$. Under this normalization, $u$ is unique. 
\par
Since lotteries are identified with points in $\R^N$, the indifference sets of an EU preference are parallel hyperplanes. A well-known generalization of expected utility is betweenness preference \citep{Dekel:1986}, which preserves the linearity of indifference sets but relaxes their parallelism.
\begin{definition}
    A preference $\succsim$ is a betweenness preference if there exists $u:X\times[0,1]\to\mathbb{R}$, continuous in the second argument, such that $p\succsim q\Leftrightarrow V(p)\geq V(q)$, where $V(p)$ is defined implicitly as the unique $v\in[0,1]$ that solves 
    $$\sum_{n=0}^Nu(x_n,v)p(x_n)=vu(x_N,v)+(1-v)u(x_0,v).$$
\end{definition}
This representation is often called implicit expected utility. The representation parameter $u(x,v)$ is unique up to positive affine transformations that are continuous in $v$. Hence, $u$ can be normalized so that $u(x_0,v)=0$ and $u(x_N,v)=1$ for all $v\in[0,1]$. Under this normalization, the indifference set associated with utility level $v$ is given by $\sum_{n=1}^{N}u(x_n,v)p_n=v$, which is a hyperplane with normal vector $(u(x_1,v),\cdots,u(x_{N-1},v),1)$. Moreover, since preferences are monotone with respect to first-order stochastic dominance, $u(x,v)$ is weakly increasing in $x$:
$$0\leq u(x_1,v)\leq\cdots\leq u(x_{N-1},v)\leq1$$
for all $v\in[0,1]$.
\par
Expected utility can now be viewed as a special case of betweenness preference, in which $u(x,v)$ is independent of $v$. We next introduce two notable subclasses of betweenness preference: disappointment aversion \citep{Gul:1991} and weighted expected utility \citep{Chew:1983}, both of which also nest expected utility.
\begin{definition}
    A preference $\succsim$ is a disappointment aversion (DA) preference if there exist $u:X\to\mathbb{R}$ and $\beta\in(-1,\infty)$ such that $p\succsim q\Leftrightarrow V(p)\geq V(q)$, where $V(p)$ is defined implicitly as the unique $v\in\R$ that solves 
    $$\sum_{x\in X}\phi(x,v)p(x)=v$$
    with
    $$\phi(x,v)=\begin{cases}u(x)\ &\text{if } u(x)\leq v,\\
    \frac{u(x)+\beta v}{1+\beta}\ &\text{if } u(x)>v.\end{cases}$$
\end{definition}
In a DA representation, $\beta$ is unique, whereas $u$ is unique up to positive affine transformations. Hence, $u$ can be normalized so that $u(x_0)=0$ and $u(x_N)=1$.
\par
\begin{definition}
    A preference $\succsim$ is a weighted expected utility (WEU) preference if there exist $u:X\to\mathbb{R}$ and $w:X\to\R_{++}$ such that the function $V:\Delta\to\R$ defined by
    $$V(p)=\frac{\sum_{x\in X}u(x)w(x)p(x)}{\sum_{x\in X}w(x)p(x)}$$
    represents $\succsim$.
\end{definition}
In a WEU representation, the parameters $u(\cdot)$ and $w(\cdot)$ are unique under the normalization $u(x_0)=0$, $u(x_N)=1$, and $w(x_0)=w(x_N)=1$. Moreover, under this normalization, $V(p)$ is the unique $v\in[0,1]$ that solves
$$\sum_{n=1}^N\left[w(x_n)u(x_n)+(1-w(x_n))v\right]p(x_n)=v.$$
This corresponds to an implicit expected utility representation with parameter
$$\tilde{u}(x,v)=w(x)u(x)+(1-w(x))v.$$
Thus, every WEU preference is a betweenness preference.


\section{The Results}
We begin by reviewing the identification result for random expected utility. The following theorem is due to \citet{Gul:2006}.
\begin{thm}\label{thm:EU}
    A random choice correspondence is rationalized by at most one random expected utility preference.
\end{thm}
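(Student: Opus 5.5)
Under the normalization $u(x_0)=0$, $u(x_N)=1$, each EU preference corresponds to a unique vector $u=(u_1,\dots,u_{N-1})$ with $0<u_1<\cdots<u_{N-1}<1$, which I write as $U\subseteq\R^{N-1}$. A random EU preference is therefore a Borel probability measure $\pi$ on $U$. The plan is to show that $\rho$ pins down $\pi$ on a family of sets that generates the Borel $\sigma$-algebra of $U$ and is closed under finite intersection. The $\pi$–$\lambda$ theorem then gives uniqueness.

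The first step is to find this family using binary menus. Consider a direction $d\in\R^N$ whose coordinates sum to zero, so that $d$ describes a change in the probabilities of $(x_0,\dots,x_N)$. Take any lottery $q$ in the interior of the simplex and set $p=q+\epsilon d$ with $\epsilon$ small. Then
$$\rho(\{p\},\{p,q\})=\pi\Bigl\{u:\textstyle\sum_{n} u_n d_n>0\Bigr\},$$
with $u_0=0$ and $u_N=1$. Hence every open half-space $\{u:\langle u,a\rangle>c\}$ in $\R^{N-1}$ has a $\pi$-measure determined by $\rho$. Taking $d_N=c$, $d_0=-c-\sum_{n<N}a_n$ and $d_n=a_n$ for the middle prizes produces every affine inequality in $u$. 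Closed half-spaces are handled the same way: $\rho(\{p,q\},\{p,q\})$ gives the mass of the indifference hyperplane, and adding it gives $\{\geq\}$.

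The second step extends this from half-spaces to their finite intersections, which are the open convex polyhedra. I use the menu $D=\{q,\ q+\epsilon d^1,\dots,q+\epsilon d^k\}$ and look at the event that $q$ is uniquely optimal, $\rho(\{q\},D)$. This event is exactly
$$\{u:\langle u,d^j\rangle<0 \text{ for all } j\}$$
in affine form, and all the points lie in the simplex when $\epsilon$ is small. Any finite intersection of strict half-spaces is obtained this way, so $\rho$ determines $\pi$ on the class of all sets $\{u:\langle u,a^j\rangle<c_j,\ j=1,\dots,k\}$. This class is closed under finite intersections and generates the Borel sets, since it contains all open boxes. By Dynkin's $\pi$–$\lambda$ theorem, two random EU preferences that rationalize $\rho$ agree on every Borel set. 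That finishes the proof.

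The main point to check is that each affine constraint in $u$ can really be realized by a feasible perturbation direction. That is, the coordinates of $d$ must sum to zero, and the constraint must be affine rather than linear only because $u_0$ and $u_N$ are fixed. The reparametrization in the first step handles this. I also need to confirm that the set of preferences is measurable, which holds because we identify preferences with points of $U\subseteq\R^{N-1}$. Ties play no role beyond the fact that the multi-valued $\rho$ records them exactly, so the events above are precisely the ones we need.
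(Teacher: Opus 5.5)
Your proof is correct, but it takes a different route from the paper's. The paper uses only binary menus $\{p^0,p^0+\epsilon r\}$ around a single fixed interior lottery. These give $\mu\{u:\sum_{n=1}^{N-1}u_nr_n<t\}$ for all $(r_1,\dots,r_{N-1},t)$, i.e.\ the law of every linear functional of $(u_1,\dots,u_{N-1})$. The Cram\'er--Wold theorem then determines the joint law. Binary menus reveal only half-spaces, and these do not form a $\pi$-system, so the paper needs a Fourier-type uniqueness result at that step.

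You avoid that step by using larger menus $\{q,q+\epsilon d^1,\dots,q+\epsilon d^k\}$. There the event that $q$ is the unique maximizer is already an intersection of open half-spaces. This fixes $\pi$ on a $\pi$-system of open polyhedra containing all open boxes, and Dynkin's theorem finishes. Your argument is therefore more elementary and makes the joint information explicit. The cost is that it needs choice data from menus of arbitrary finite size, whereas the paper shows that binary comparisons against one lottery already suffice.

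Both arguments are anchored at a fixed interior lottery. So either one supports the paper's later reuse of this step for betweenness preferences, namely identifying the law of the normal vector of the indifference hyperplane through a given lottery. This works because a lottery is strictly worse than $q$ exactly when it lies strictly on the worse side of that hyperplane.

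There are two cosmetic slips:
\begin{itemize}
\item $d$ should live in $\R^{N+1}$, with one coordinate per prize $x_0,\dots,x_N$.
\item With $d_N=c$ and $d_0=-c-\sum_{n<N}a_n$ you obtain $\langle u,a\rangle>-c$ rather than $>c$. Take $d_N=-c$ and $d_0=c-\sum_{n<N}a_n$ instead.
\end{itemize}
Neither affects the argument, since $c$ ranges over all of $\R$.
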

\begin{proof}
Under normalization $u(x_0)=0$ and $u(x_N)=1$, the EU representation parameter $u:X\to\R$ is unique. Hence, a random EU preference can be represented by a joint distribution $\mu$ of $(u_1,\cdots,u_{N-1})\in\R^{N-1}$.\footnote{In \cite{Gul:2006}, prizes are abstract objects, and monotonicity is not imposed. In our setting of monetary prizes, monotonicity permits the two-point normalization $u(x_0)=0$ and $u(x_N)=1$. This normalization leads to the following proof based on the Cram\'{e}r-Wold theorem (see, e.g., \citet[Theorem 29.4]{Billingsley1995}), which differs from the argument in \cite{Gul:2006}.}
\par
Fix an interior $p^0\in\Delta$. For every $r\in\mathbb{R}^N$, $p^0+\epsilon r\in\Delta$ when $\epsilon>0$ is sufficiently small. Moreover,
$$\sum_{n=1}^Nu_np^0_n>\sum_{n=1}^Nu_n(p^0_n+\epsilon r_n)\iff\sum_{n=1}^Nu_nr_n<0.$$
Consider $r=(r_1,\cdots,r_{N-1},-t)$. Since $u_N=1$, we have
$$\sum_{n=1}^Nu_nr_n<0\iff\sum_{n=1}^{N-1}u_nr_n< t.$$
If an RCC $\rho$ is rationalized by $\mu$, then
$$\rho(p^0,\{p^0,p^0+\epsilon r\})=\mu\left\{(u_1,\cdots,u_{N-1}):\sum_{n=1}^{N-1}u_nr_n< t\right\}.$$
This holds for all $(r_1,\cdots,r_{N-1})\in\mathbb{R}^{N-1}$ and all $t\in\mathbb{R}$. In other words, we can recover the distribution of every linear combination of $(u_1,\cdots,u_{N-1})$ from random choice. By the Cram\'{e}r-Wold theorem, the joint distribution of $(u_1,\cdots,u_{N-1})$ is uniquely determined. Hence, $\rho$ cannot be rationalized by more than one random EU preference.
\end{proof}
\par
According to the proof, observing how often a fixed interior lottery $p$ is chosen in binary comparisons with other lotteries $q$ is sufficient for identifying the distribution of EU preferences. From a geometric perspective, the proof suggests that we identify the distribution of the normal vector of the indifference set containing $p$ from $\{\rho(p,\{p,q\})\}_{q\in\Delta}$. Then, since all the indifference sets of an EU preference are parallel, we obtain the distribution of EU preferences. When preferences are relaxed to the class of betweenness preferences, indifference sets remain hyperplanes. Hence, we can apply the same technique to identify, for each interior $p$, the distribution of the normal vector of the indifference set containing $p$. The extension to boundary lotteries follows by approximation from interior lotteries.\footnote{Under the implicit expected utility representation with a normalized $u(x,v)$, the normal vector of the indifference set containing lottery $p$ is given by $(u(x_1,V(p)),\cdots,u(x_{N-1},V(p)),1)$. Since $u(x,v)$ is continuous in $v$ and $V(p)$ is continuous in $p$, this normal vector is continuous in $p$. Therefore, if $p^m$ converges to $p$, the distribution of $(u(x_1,V(p^m)),\cdots,u(x_{N-1},V(p^m)),1)$ converges to the distribution of $(u(x_1,V(p)),\cdots,u(x_{N-1},V(p)),1)$.} The remaining challenge is whether we can recover the joint distributions of these normal vectors across different lotteries. 
\par
We then present the main, positive identification result of this paper. It shows that the uniqueness of random EU preference extends to random disappointment aversion. Note that, under normalization $u_0=0$ and $u_N=1$, a random DA preference can also be represented by a joint distribution of the representation parameter $(u_1,\cdots,u_{N-1},\beta)$.
\begin{thm}\label{thm:DA}
    A random choice correspondence is rationalized by at most one random disappointment aversion preference.
\end{thm}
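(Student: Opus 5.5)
The plan is to find a family of binary menus whose choice probabilities, in a suitable limit, depend on the DA parameters only through the single random vector
$$w\coloneqq(1+\beta)\,(u_1,\dots,u_{N-1},1)\in\R^N_{++}.$$
The map $(u_1,\dots,u_{N-1},\beta)\mapsto w$ is a bijection onto its image, with inverse $1+\beta=w_N$ and $u_n=w_n/w_N$. So it suffices to show that $\rho$ pins down the law of $w$. Note $w$ is strictly positive because $\beta>-1$ and monotonicity forces $0<u_1<\dots<u_{N-1}<1$. Unlike Theorem \ref{thm:EU}, perturbing around an interior lottery alone will not work: the normal vector there mixes $u$ and $\beta$ in a way that depends on which prizes lie above $V(p)$. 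I will instead work near the worst prize $x_0$, where every prize $x_1,\dots,x_N$ is an elation outcome.

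\textbf{Step 1: the binary menus.} For $\alpha\in(0,1)$ let $p^\alpha=\alpha x_N+(1-\alpha)x_0$. Solving the implicit equation gives $V(p^\alpha)=v_\alpha\coloneqq\alpha/(1+\beta-\alpha\beta)$. For $r\in\R^N_+$ with $\sum_n r_n>0$, let $q^\alpha$ be the lottery with $q^\alpha_n=\alpha r_n$ for $n\ge1$ and the remaining mass on $x_0$; this is feasible for small $\alpha$. By betweenness, $q^\alpha\succsim p^\alpha$ iff $\sum_x\phi(x,v_\alpha)q^\alpha(x)\ge v_\alpha$. On the event $\{u_1>v_\alpha\}$ every prize $x_n$ with $n\ge1$ is in the elation region. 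The condition then becomes
$$\sum_{n=1}^N u_n r_n\;\ge\;\frac{1+\beta-\beta\alpha\sum_n r_n}{1+\beta-\alpha\beta}.$$
As $\alpha\to0$, the right-hand side tends to $1$ pointwise in $\beta$, and $\pi\{u_1\le v_\alpha\}\to0$ because $v_\alpha\to0$ pointwise and $u_1>0$. Hence
$$\lim_{\alpha\to0}\rho\bigl(\{p^\alpha\},\{p^\alpha,q^\alpha\}\bigr)=\pi\{w\cdot r<1\},$$
at least whenever $\pi\{w\cdot r=1\}=0$. I would handle the boundary by sandwiching with $r(1\pm\delta)$ and using that the excluded set of $r$ is at most countable along each ray.

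\textbf{Step 2: from half-spaces to the law of $w$.} Replacing $r$ by $t r'$ with $t>0$ and $r'\in\R^N_+$ shows that $\rho$ determines $\pi\{w\cdot r'<s\}$ for all $s>0$. This is the full distribution of the positive variable $w\cdot r'$, for every $r'$ in the closed positive orthant. Consequently the Laplace transform $E[e^{-r'\cdot w}]$ is determined for all $r'\in\R^N_+$. A distribution on $\R^N_+$ is uniquely determined by its Laplace transform on the orthant. This replaces Cramér--Wold, which I cannot use directly because only nonnegative directions are available near $x_0$. Therefore two random DA preferences that rationalize $\rho$ induce the same law of $w$, hence the same law of $(u_1,\dots,u_{N-1},\beta)$.

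\textbf{Main obstacle.} The delicate step is the limit in Step 1. I must verify that the convergence of the threshold and the vanishing of the ``disappointment'' event $\{u_1\le v_\alpha\}$ justify passing to the limit under $\pi$ by dominated convergence. I must also handle ties and boundary mass correctly, since $\rho$ records the exact optimal set. The fix I would try first is to work with $\rho(\{p^\alpha\},\cdot)$ versus $\rho(\{q^\alpha\},\cdot)$ separately. Each is an indicator of a strict inequality, and I would bracket $\pi\{w\cdot r<1\}$ between liminf and limsup bounds obtained from slightly scaled $r$. Continuity from below or above of the distribution function of $w\cdot r$ then closes the gap.
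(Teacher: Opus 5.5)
\textbf{There is a genuine gap in Step 1.} The limit you state does not follow from your own displayed inequality. On $\{u_1>v_\alpha\}$, the condition $q^\alpha\succsim p^\alpha$ is $\sum_{n=1}^N u_nr_n\ge \frac{1+\beta-\beta\alpha\sum_n r_n}{1+\beta-\alpha\beta}$, and you correctly note that the right-hand side tends to $1$. So the limit is $\pi\{\sum_{n=1}^N u_n r_n<1\}$ (with $u_N=1$), not $\pi\{w\cdot r<1\}$. In terms of your $w$ it is $\pi\{w\cdot r<w_N\}$, a homogeneous half-space that depends only on the direction $w/w_N=(u_1,\dots,u_{N-1},1)$.

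The parameter $\beta$ drops out entirely. Step 2, which is otherwise a valid Laplace-transform argument on the orthant, therefore identifies only the law of $(u_1,\dots,u_{N-1})$, not the joint law of $(u_1,\dots,u_{N-1},\beta)$. This cannot be repaired by a better limit argument at $x_0$. Binary choices reveal indifference hyperplanes, i.e.\ normal vectors only up to positive scaling. At the bottom corner every prize $x_1,\dots,x_N$ is an elation outcome, and the normal $\bigl(\frac{u_n+\beta v}{1+\beta}\bigr)_{n=1}^N$ converges as $v\to0$ to $(u_1,\dots,u_{N-1},1)/(1+\beta)$. Here $\beta$ acts as an overall scale, which leaves the hyperplane, and hence every limiting choice probability, unchanged. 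At a fixed $\alpha>0$ the normal does depend on $\beta$, but only on the realization-dependent event $\{u_1>v_\alpha\}$, and that dependence vanishes in your limit $\alpha\to0$. Your worry about interior lotteries was legitimate, but the bottom corner is the wrong fix. A uniform split in which every prize elates is exactly the case where $\beta$ is a pure rescaling.

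\textbf{What is missing} is a family of lotteries at which $\beta$ distorts the relative tradeoffs between prizes, with an elation/disappointment split that is the same for every realization of the preference. The paper uses $p=ax_N+(1-a)x_{N-1}$.
\begin{itemize}
\item There $V(p)\in[u_{N-1},1]$ for every $(u,\beta)$, so $x_N$ is the unique elation outcome.
\item The normalized normal is $(Y_{1,a},\dots,Y_{N-1,a},1)$, with $Y_{k,a}/Y_{N-1,a}=u_k/u_{N-1}$ and $Y_{N-1,a}/(1-Y_{N-1,a})=aZ_{N-1}+(1-a)Z_N$, where $Z_{N-1}=u_{N-1}/(1-u_{N-1})$ and $Z_N=(1+\beta)Z_{N-1}$.
\item Because this is affine in $a$, the identified laws of $(Z_1,\dots,Z_{N-2},aZ_{N-1}+(1-a)Z_N)$ for all $a\in[0,1]$ yield the full Laplace transform of $(Z_1,\dots,Z_N)$ on the positive orthant.
\item $(u,\beta)$ is then recovered from $Z$.
\end{itemize}
Your Step 2 tool, the Laplace transform on the orthant, is the right one. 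It has to be applied to these $a$-indexed normals rather than to the $\alpha\to0$ limit at $x_0$.
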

\begin{proof}[Sketch of Proof]
The proof utilizes a special structure of a DA representation for lotteries over the best and second-best prizes. Fix a lottery $p=a x_N+(1-a)x_{N-1}$. For any lottery $q$ equally good as $p$, the utility value of $q$ must lie between $u_{N-1}$ and 1. Thus, prize $x_N$ will be the only elation outcome, and all other prizes are disappointment. Therefore, on the indifference set of $p$, the tradeoff between the probabilities of $x_N$ and $x_{N-1}$ is distorted by $\beta$, but the tradeoff between two prizes worse than $x_N$ is governed only by the utility index $u$.
\par
Such tradeoffs are reflected in the implicit utility index $\phi(x,v)$ at $v=V(p)$. Specifically, 
$$\left(\phi(x_1,V(p)),\cdots,\phi(x_N,V(p))\right)=\left(u_1,\cdots,u_{N-1},\frac{1+\beta(1-a)u_{N-1}}{1+\beta(1-a)}\right).$$
Consider the local probability substitution rate between $x_n$ and $x_{N-1}$ for $n\neq N-1$:
$$\left(\frac{\phi(x_n,V(p))}{\phi(x_{N-1},V(p))}\right)_{n\neq N-1}=\left(\frac{u_1}{u_{N-1}},\cdots,\frac{u_{N-2}}{u_{N-1}},\frac{\phi(x_N,V(p))}{\phi(x_{N-1},V(p))}\right).$$
This is a random vector in $\R^{N-1}$ whose distribution is identifiable from random choice according to the proof of Theorem \ref{thm:EU} and the subsequent discussion. The last entry $\frac{\phi(x_N,V(p))}{\phi(x_{N-1},V(p))}$ depends only on $u_{N-1}$, $\beta$, and $a$ in a seemingly nonlinear way. However, a suitable transformation gives us an affine structure:
$$\frac{\phi(x_{N-1},V(p))}{\phi(x_N,V(p))-\phi(x_{N-1},V(p))}=a\times\frac{u_{N-1}}{1-u_{N-1}}+(1-a)\times\frac{(1+\beta)u_{N-1}}{1-u_{N-1}}.$$
This makes a Laplace-transform argument for recovering the joint distribution applicable. The joint distribution of $\frac{u_1}{u_{N-1}},\cdots,\frac{u_{N-2}}{u_{N-1}}$, $\frac{u_{N-1}}{1-u_{N-1}}$, and $\frac{(1+\beta)u_{N-1}}{1-u_{N-1}}$ is uniquely recoverable from random choice. We can then trace back the distribution of $(u_1,\cdots,u_{N-1},\beta)$. See Appendix for the details.
\end{proof}
\par
\Cref{thm:DA} shows that the identification property of random expected utility is robust to the introduction of disappointment aversion. While DA preferences incorporate an additional parameter to EU representation to accommodate Allais-type departures from expected utility, such flexibility comes at no cost to identification: random choice remains fully informative about preference heterogeneity. In particular, heterogeneity in attitudes toward both risk and disappointment can be uniquely identified from random choice data.
\par
The next result shows that this robustness is limited. When the class of admissible preferences is expanded from disappointment aversion to the betweenness class, unique identification may fail.
\begin{prop}\label{prop:BP}
    There exists a random choice correspondence rationalized by more than one random betweenness preference.
\end{prop}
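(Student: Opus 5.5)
The plan is a Fishburn-style "swapping" construction. The key is that betweenness preferences may share one common indifference hyperplane while differing arbitrarily below and above it. Take $N=2$ (three prizes), so that a normalized betweenness preference is described by one continuous function $v\mapsto u(x_1,v)\in[0,1]$. Its level-$v$ indifference set is the line $u(x_1,v)p_1+p_2=v$.

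Fix $m\in(0,1)$ and two distinct values $a,b\in(0,1)$ with $|a-b|$ small, say $a=0.4$, $b=0.5$, $m=0.45$. For $c,d\in\{a,b\}$, let $\succsim_{cd}$ be defined by the continuous function $u_{cd}(x_1,\cdot)$ with the following properties:
\begin{itemize}
\item it equals $c$ on $[0,0.3]$;
\item it moves linearly from $c$ to $m$ on $[0.3,0.45]$;
\item it equals $m$ on $[0.45,0.55]$;
\item it moves linearly from $m$ to $d$ on $[0.55,0.7]$;
\item it equals $d$ on $[0.7,1]$.
\end{itemize}

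The first step is to verify that each $\succsim_{cd}$ is a legitimate monotone betweenness preference. For each $p$ we need a unique $v\in[0,1]$ solving $u_{cd}(x_1,v)p_1+p_2=v$. The map $v\mapsto u_{cd}(x_1,v)p_1+p_2-v$ has slope at most $|\partial_v u_{cd}|\,p_1-1<0$ whenever the Lipschitz constant of $u_{cd}(x_1,\cdot)$ is below $1$. Our choices give a constant of $0.1/0.15<1$. The map is also $\ge 0$ at $v=0$ and $\le 0$ at $v=1$, so existence and uniqueness follow. Monotonicity holds because $0<u_{cd}(x_1,v)<1$. The four preferences are distinct, since their indifference lines differ at low or high levels.

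The second step is the common separator. Every $\succsim_{cd}$ has the same indifference set at level $1/2$, namely the line $\ell=\{p: mp_1+p_2=1/2\}$. By uniqueness of $V$, $V_{cd}(p)>1/2$ iff $mp_1+p_2>1/2$, and likewise for $<$ and $=$. Hence the regions $U$ (strictly above $\ell$), $\ell$ itself, and $L$ (strictly below $\ell$) are the same for all four types, and every point of $U$ is strictly preferred to every point of $\ell\cup L$. Moreover, the restriction of $\succsim_{cd}$ to $U\cup\ell$ depends only on $d$, and its restriction to $L\cup\ell$ depends only on $c$. This is because $V_{cd}(p)$ for $p$ in those regions is determined by $u_{cd}(x_1,v)$ on $v\ge 1/2$, respectively $v\le 1/2$, where $u_{cd}=m$ near $1/2$.

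The third step compares two random preferences:
$$\pi_1=\tfrac12\delta_{\succsim_{aa}}+\tfrac12\delta_{\succsim_{bb}},\qquad \pi_2=\tfrac12\delta_{\succsim_{ab}}+\tfrac12\delta_{\succsim_{ba}}.$$
Fix a menu $D$. If $D\cap U\ne\emptyset$, the set of best elements of $D$ under $\succsim_{cd}$ is the set of best elements of $D\cap U$, which depends only on $d$. Otherwise it is the set of best elements of $D\subseteq L\cup\ell$, which depends only on $c$. Under both $\pi_1$ and $\pi_2$ the marginal law of $c$ is uniform on $\{a,b\}$, and so is the marginal law of $d$. Thus $\rho(A,D)$ is the same under $\pi_1$ and $\pi_2$ for every $A$ and $D$. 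Since $\pi_1\neq\pi_2$ as distributions over preferences, this proves \Cref{prop:BP}.

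The main obstacle is the first step: bending indifference lines of different slopes into one common line without making two levels cross inside $\Delta$, while keeping the implicit representation well defined. The Lipschitz-below-one bound on $v\mapsto u(x_1,v)$ is the device I would use to control this. With $N=2$ it reduces to the elementary monotonicity check above.
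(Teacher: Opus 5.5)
Your proof is correct and follows essentially the same Fishburn-style swap as the paper. Both use three prizes and four betweenness preferences that share one indifference line, and both build two equal mixtures by exchanging the parts of the indifference maps below that line. The paper starts from two WEU preferences with different focal points sharing the indifference line through $x_1$. You instead build $u(x_1,\cdot)$ explicitly with a plateau at level $1/2$, which lets you check well-definedness of $V$ and monotonicity directly, and justify RCC equality more carefully than the paper's figure-based argument: the best set depends only on $d$ when $D$ meets $U$, and only on $c$ otherwise.
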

\begin{proof}
We provide a counterexample in the three-prize setting ($X=\{x_0,x_1,x_2\}$). Two random preferences $\mu$ and $\mu'$ are defined as follows. Under $\mu$, the realized preference is either $\succsim_1$ or $\succsim_2$ with equal probability. Figure~\ref{RIEU1} depicts their indifference maps. Under $\mu'$, the realized preference is either $\succsim'_1$ or $\succsim'_2$ with equal probability. Figure~\ref{RIEU2} depicts their indifference maps.
\par
These four preferences all belong to the class of betweenness preferences since their indifference sets are linear. They share the same indifference set containing $x_1$. Under $\succsim_1$, indifference sets intersect at a single point $r^1$ outside the simplex, whereas under $\succsim_2$, they intersect at a different point $r^2$. We then construct $\succsim_1'$ and $\succsim_2'$ by swapping the portions of the indifference maps of $\succsim_1$ and $\succsim_2$ that lie below the common indifference set containing $x_1$. 
\par
We claim that $\mu$ and $\mu'$ rationalize the same RCC. For example, consider lotteries $p,q$ shown in the figures. Since $p\succ_1q$ but $q\succ_2p$, each lottery is chosen with probability $1/2$ under $\mu$. The same is true under $\mu'$. More generally, by construction, the distribution of the indifference set containing any lottery $q$ is identical under $\mu$ and $\mu'$. It follows that $\mu$ and $\mu'$ assign the same choice probability to every lottery in every menu. Therefore, they rationalize the same RCC.
\end{proof}

\begin{figure}
\centering
\begin{tikzpicture}[line cap=round,line join=round,>=triangle 45,x=3cm,y=3.5cm]
\begin{scope}
\draw [line width=2pt,color=darkgray] (0,0)-- (1,0);
\draw [line width=2pt,color=darkgray] (1,0)-- (0,1);
\draw [line width=2pt,color=darkgray] (0,1)-- (0,0);
\draw [line width=0.7pt,dashed,color=red] (1.2,-0.1)--(0,0.5)--(-1,1);
\draw [line width=0.7pt,dashed,color=red] (1.2,-0.1)--(0,0)--(-0.3,1/40);
\draw [line width=0.7pt,dashed,color=red] (1.2,-0.1)--(0,0.25)--(-0.4,11/30);
\draw [line width=0.7pt,dashed,color=red] (1.2,-0.1)--(0,0.75)--(-0.4,31/30);
\draw [line width=0.7pt,dashed,color=red] (1.2,-0.1)--(0,1)--(-0.2,71/60);
\draw [line width=1pt,color=red] (1,0)-- (0,0.5);
\draw [line width=1pt,color=red] (6/7,0)-- (0,0.25);
\draw [line width=1pt,color=red] (6/7,1/7)-- (0,0.75);
\draw (-1,1.2) node[anchor=center] {(a) $\succsim_1$:};
\begin{scriptsize}
\draw [fill=blue] (1,0) circle (2pt);
\draw[color=blue] (1.1,0.07) node {$x_1$};
\draw [fill=blue] (0,0) circle (2pt);
\draw[color=blue] (-0.05,-0.05) node {$x_0$};
\draw [fill=blue] (0,1) circle (2pt);
\draw[color=blue] (0,1.07) node {$x_2$};
\draw [fill=gray] (1.2,-0.1) circle (2pt);
\draw[color=darkgray] (1.25,-0.15) node {$r^1$};
\draw [fill=gray] (-1,1) circle (2pt);
\draw[color=darkgray] (-1,0.9) node {$r^2$};
\draw [fill=blue] (0,0.25) circle (2pt);
\draw[color=blue] (0.06,0.3) node {$q'$};
\draw [fill=blue] (0.5,0) circle (2pt);
\draw[color=blue] (0.47,0.06) node {$p'$};
\draw [fill=blue] (0,0.75) circle (2pt);
\draw[color=blue] (0.06,0.8) node {$q$};
\draw [fill=blue] (0.5,0.5) circle (2pt);
\draw[color=blue] (0.56,0.56) node {$p$};
\end{scriptsize}
\end{scope}
\begin{scope}[xshift=6cm]
\draw [line width=2pt,color=darkgray] (0,0)-- (1,0);
\draw [line width=2pt,color=darkgray] (1,0)-- (0,1);
\draw [line width=2pt,color=darkgray] (0,1)-- (0,0);
\draw [line width=0.7pt,dashed,color=red] (1.2,-0.1)--(0,0.5)--(-1,1);
\draw [line width=0.7pt,dashed,color=red] (-1,1)--(0,0)--(0.2,-0.2);
\draw [line width=0.7pt,dashed,color=red] (-1,1)--(0,0.25)--(0.6,-0.2);
\draw [line width=0.7pt,dashed,color=red] (-1,1)--(0,0.75)--(0.75,9/16);
\draw [line width=0.7pt,dashed,color=red] (-1,1)--(0,1)--(0.5,1);
\draw [line width=1pt,color=red] (1,0)-- (0,0.5);
\draw [line width=1pt,color=red] (1/3,0)-- (0,0.25);
\draw [line width=1pt,color=red] (1/3,2/3)-- (0,0.75);
\draw (-1,1.2) node[anchor=center] {(b) $\succsim_2$:};
\begin{scriptsize}
\draw [fill=blue] (1,0) circle (2pt);
\draw[color=blue] (1.1,0.07) node {$x_1$};
\draw [fill=blue] (0,0) circle (2pt);
\draw[color=blue] (-0.05,-0.05) node {$x_0$};
\draw [fill=blue] (0,1) circle (2pt);
\draw[color=blue] (0,1.07) node {$x_2$};
\draw [fill=gray] (1.2,-0.1) circle (2pt);
\draw[color=darkgray] (1.25,-0.15) node {$r^1$};
\draw [fill=gray] (-1,1) circle (2pt);
\draw[color=darkgray] (-1,0.9) node {$r^2$};
\draw [fill=blue] (0,0.25) circle (2pt);
\draw[color=blue] (0.06,0.3) node {$q'$};
\draw [fill=blue] (0.5,0) circle (2pt);
\draw[color=blue] (0.47,0.06) node {$p'$};
\draw [fill=blue] (0,0.75) circle (2pt);
\draw[color=blue] (0.06,0.8) node {$q$};
\draw [fill=blue] (0.5,0.5) circle (2pt);
\draw[color=blue] (0.56,0.56) node {$p$};
\end{scriptsize}
\end{scope}
\end{tikzpicture}
\caption{}
\medskip 
\begin{minipage}{0.9\textwidth}
{\scriptsize (a): All indifference sets are linear. When extended to straight lines, they intersect at a single point $r^1$.\\(b): All indifference sets are linear. When extended to straight lines, they intersect at a single point $r^2$.\par}
\end{minipage}
\label{RIEU1}
\par\vspace{1cm}
\centering
\begin{tikzpicture}[line cap=round,line join=round,>=triangle 45,x=3cm,y=3.5cm]
\begin{scope}
\draw [line width=2pt,color=darkgray] (0,0)-- (1,0);
\draw [line width=2pt,color=darkgray] (1,0)-- (0,1);
\draw [line width=2pt,color=darkgray] (0,1)-- (0,0);
\draw [line width=0.7pt,dashed,color=red] (1.2,-0.1)--(0,0.5)--(-1,1);
\draw [line width=0.7pt,dashed,color=red] (-1,1)--(0,0)--(0.2,-0.2);
\draw [line width=0.7pt,dashed,color=red] (-1,1)--(0,0.25)--(0.6,-0.2);
\draw [line width=0.7pt,dashed,color=red] (1.2,-0.1)--(0,0.75)--(-0.4,31/30);
\draw [line width=0.7pt,dashed,color=red] (1.2,-0.1)--(0,1)--(-0.2,71/60);
\draw [line width=1pt,color=red] (1,0)-- (0,0.5);
\draw [line width=1pt,color=red] (1/3,0)-- (0,0.25);
\draw [line width=1pt,color=red] (6/7,1/7)-- (0,0.75);
\draw (-1,1.2) node[anchor=center] {(a) $\succsim_1'$:};
\begin{scriptsize}
\draw [fill=blue] (1,0) circle (2pt);
\draw[color=blue] (1.1,0.07) node {$x_1$};
\draw [fill=blue] (0,0) circle (2pt);
\draw[color=blue] (-0.05,-0.05) node {$x_0$};
\draw [fill=blue] (0,1) circle (2pt);
\draw[color=blue] (0,1.07) node {$x_2$};
\draw [fill=gray] (1.2,-0.1) circle (2pt);
\draw[color=darkgray] (1.25,-0.15) node {$r^1$};
\draw [fill=gray] (-1,1) circle (2pt);
\draw[color=darkgray] (-1,0.9) node {$r^2$};
\draw [fill=blue] (0,0.25) circle (2pt);
\draw[color=blue] (0.06,0.3) node {$q'$};
\draw [fill=blue] (0.5,0) circle (2pt);
\draw[color=blue] (0.47,0.06) node {$p'$};
\draw [fill=blue] (0,0.75) circle (2pt);
\draw[color=blue] (0.06,0.8) node {$q$};
\draw [fill=blue] (0.5,0.5) circle (2pt);
\draw[color=blue] (0.56,0.56) node {$p$};
\end{scriptsize}
\end{scope}
\begin{scope}[xshift=6cm]
\draw [line width=2pt,color=darkgray] (0,0)-- (1,0);
\draw [line width=2pt,color=darkgray] (1,0)-- (0,1);
\draw [line width=2pt,color=darkgray] (0,1)-- (0,0);
\draw [line width=0.7pt,dashed,color=red] (1.2,-0.1)--(0,0.5)--(-1,1);
\draw [line width=0.7pt,dashed,color=red] (1.2,-0.1)--(0,0)--(-0.3,1/40);
\draw [line width=0.7pt,dashed,color=red] (1.2,-0.1)--(0,0.25)--(-0.4,11/30);
\draw [line width=0.7pt,dashed,color=red] (-1,1)--(0,0.75)--(0.75,9/16);
\draw [line width=0.7pt,dashed,color=red] (-1,1)--(0,1)--(0.5,1);
\draw [line width=1pt,color=red] (1,0)-- (0,0.5);
\draw [line width=1pt,color=red] (6/7,0)-- (0,0.25);
\draw [line width=1pt,color=red] (1/3,2/3)-- (0,0.75);
\draw (-1,1.2) node[anchor=center] {(b) $\succsim_2'$:};
\begin{scriptsize}
\draw [fill=blue] (1,0) circle (2pt);
\draw[color=blue] (1.1,0.07) node {$x_1$};
\draw [fill=blue] (0,0) circle (2pt);
\draw[color=blue] (-0.05,-0.05) node {$x_0$};
\draw [fill=blue] (0,1) circle (2pt);
\draw[color=blue] (0,1.07) node {$x_2$};
\draw [fill=gray] (1.2,-0.1) circle (2pt);
\draw[color=darkgray] (1.25,-0.15) node {$r^1$};
\draw [fill=gray] (-1,1) circle (2pt);
\draw[color=darkgray] (-1,0.9) node {$r^2$};
\draw [fill=blue] (0,0.25) circle (2pt);
\draw[color=blue] (0.06,0.3) node {$q'$};
\draw [fill=blue] (0.5,0) circle (2pt);
\draw[color=blue] (0.47,0.06) node {$p'$};
\draw [fill=blue] (0,0.75) circle (2pt);
\draw[color=blue] (0.06,0.8) node {$q$};
\draw [fill=blue] (0.5,0.5) circle (2pt);
\draw[color=blue] (0.56,0.56) node {$p$};
\end{scriptsize}
\end{scope}
\end{tikzpicture}
\caption{}
\medskip
\begin{minipage}{0.9\textwidth}
{\scriptsize (a): All indifference sets are linear. When extended to straight lines, the indifference sets above $x_1$ intersect at $r^1$, whereas those below $x_1$ intersect at $r^2$.\\(b): All indifference sets are linear. When extended to straight lines, the indifference sets above $x_1$ intersect at $r^2$, whereas those below $x_1$ intersect at $r^1$.\par}
\end{minipage}
\label{RIEU2}
\end{figure}

This counterexample is analogous to the classic example of non-uniqueness of random preference in an abstract choice setting due to \cite{Fishburn1998}. In his construction, two new preferences are obtained by exchanging the lower portions of two existing rankings while keeping the upper portions fixed. Our construction can be viewed as a geometric analogue of Fishburn’s example.
\par
This non-uniqueness reflects a limitation of random choice under betweenness preferences: it does not reveal the joint distribution of choice across multiple menus. For instance, consider lotteries $p$, $q$, $p'$ and $q'$ depicted in Figures~\ref{RIEU1} and \ref{RIEU2}. Note that $p\succ_1(\succ'_1)q$ and $p'\prec_1(\succ'_1)q'$, and that $p\prec_2(\prec'_2)q$ and $p'\succ_2(\prec'_2)q'$. Therefore, under $\mu$, the probability that $p$ is chosen over $q$ {\it and} $p'$ is chosen over $q'$ is 0. However, under $\mu'$, this probability equals $\frac{1}{2}$. Thus, $\mu$ and $\mu'$ induce different joint distributions of choices across $\{p,q\}$ and $\{p',q'\}$, despite inducing the same RCC.
\par
Moreover, preferences $\succsim_1$ and $\succsim_2$ in Figure~\ref{RIEU1} are WEU preferences, whose defining geometric feature is that all indifference sets intersect at a common point outside the simplex. Preferences $\succsim_1'$ and $\succsim_2'$ in Figure~\ref{RIEU2}, on the other hand, belong to the more general class of semi-weighted utility \citep{Chew:1989}.\footnote{In the three-prize setting, semi-weighted utility preferences are characterized geometrically by two points outside the simplex: all indifference sets above the middle prize $x_1$ intersect at one point, whereas all indifference sets below $x_1$ intersect at another.} Therefore, the uniqueness of random preference fails even within the class of semi-weighted utility preferences.
\par
On the other hand, if we restrict preferences to the WEU class, the counterexample for \Cref{prop:BP} would not arise, since the preferences in Figure~\ref{RIEU2} are no longer admissible. This suggests the conjecture that unique identification may hold under WEU, especially since WEU preferences also admit a finite-dimensional parameterization and possess a strong geometric structure. Nevertheless, the next result shows that this conjecture is only partially correct. When there are only three prizes, as in the above counterexample, uniqueness is restored. Yet, when more prizes are introduced, non-uniqueness reappears. 
\begin{prop}\label{prop:WEU}
    In the three-prize setting ($|X|=3$), a random choice correspondence is rationalized by at most one random weighted expected utility preference. However, when $|X|\geq4$, there exists a random choice correspondence rationalized by more than one random weighted expected utility preference.
\end{prop}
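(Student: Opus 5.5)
For the three-prize part I will redo the proof of \Cref{thm:EU} in new coordinates. The goal is a parameterization in which every binary comparison $\rho(p,\{p,q\})$ is a half-space event.

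With $X=\{x_0,x_1,x_2\}$ and the normalization $u_0=0$, $u_2=1$, $w_0=w_2=1$, a WEU preference is described by $(u_1,w_1)$. Set $a=w_1u_1>0$ and $b=1-w_1<1$. The map $(u_1,w_1)\mapsto(a,b)$ is a bijection onto its image, so it suffices to identify the joint law of $(a,b)$. For any lottery the denominator
$$W(p)=w_0p_0+w_1p_1+w_2p_2=1-bp_1$$
is strictly positive. Hence $V(q)<V(p)$ if and only if
$$(aq_1+q_2)(1-bp_1)<(ap_1+p_2)(1-bq_1).$$
The key observation is that the $ab$ terms cancel when this is expanded. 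Writing $d=q-p$, the condition becomes
$$a\,d_1+b\,(d_1p_2-p_1d_2)+d_2<0,$$
which is an affine half-space in $(a,b)$.

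So $\rho(p,\{p,p+\epsilon d\})$ equals $\mu\{(a,b): a\,d_1+b\,(d_1p_2-p_1d_2)<-d_2\}$ for every interior $p$ and every small $\epsilon>0$. Next I will check that the coefficient triples reach every linear functional of $(a,b)$ at every threshold, after positive rescaling.
\begin{itemize}
\item Take $d_2=-\epsilon$ and $d_1=\epsilon\delta$, then divide by $\epsilon$. The event becomes $a\delta+b(\delta p_2+p_1/\epsilon)<1$. Varying $p_1,p_2\in(0,1)$ and $\epsilon$, the $b$-coefficient takes every value $c>\min(0,\delta)$.
\item Taking $d_2=+\epsilon$ gives the reversed inequalities, with the $b$-coefficient ranging over every $c<\max(0,\delta)$.
\item Taking $d_2=0$, together with ties handled through complements (as in the EU proof), covers the remaining cases.
\end{itemize}
This yields the distribution of $\delta a+cb$ for all $(\delta,c)$. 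The Cram\'{e}r--Wold theorem then identifies the joint law of $(a,b)$, and hence of $(u_1,w_1)$. The only delicate bookkeeping is this coverage check, including the signs and the degenerate direction.

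For $|X|\geq4$ the same computation explains why uniqueness breaks. Write $A=(w_nu_n)_{n<N}$ and $B=(1-w_n)_{n<N}$, so $W(p)=1-B\cdot p$ in the truncated coordinates. The comparison becomes
$$A\cdot d+d_N+B\cdot(p_N q-q_N p)+\bigl[(A\cdot p)(B\cdot q)-(A\cdot q)(B\cdot p)\bigr]<0.$$
The bracket is a genuine bilinear term $A^{\top}(p\wedge q)B$, and it no longer vanishes once $N\geq3$. Random choice therefore only reveals the laws of a restricted family of quadratic functionals of $(A,B)$, whose quadratic coefficients are decomposable two-forms $p\wedge q$ tied to the linear coefficients.

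I will build a counterexample in the style of \citet{Fishburn1998} and \Cref{prop:BP}. I take two two-point (or finitely supported) distributions $\nu$ and $\nu'$ on WEU parameters near an EU preference, which keeps all weights positive and utilities increasing, so monotonicity holds. The supports are chosen so that for every pair $(p,q)$ the multisets of signs of the quadratic functional coincide under $\nu$ and $\nu'$, but the pairings differ. My first attempt is a swap $\nu=\tfrac12\delta_{(A,B)}+\tfrac12\delta_{(A',B')}$ versus $\nu'=\tfrac12\delta_{(A,B')}+\tfrac12\delta_{(A',B)}$. I choose $A,A'$ to differ only in the coordinate that is annihilated by the relevant part of $p\wedge q$, and $B,B'$ to differ in a complementary coordinate. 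For example, with $N=3$, I would concentrate the perturbations on $x_1$ versus $x_2$ so that the two pencils of indifference planes share the codimension-two axis structure. I would then verify, as in \Cref{prop:BP}, that the indifference hyperplane through each lottery has the same distribution under both random preferences. The extension to larger $N$ is by adding prizes with EU-like parameters held fixed.

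The main obstacle is this four-prize construction: producing explicit supports for which the sign patterns match for all menus simultaneously. If the simple swap fails, my fallback is to look for a linear involution $T$ of $(A,B)$-space that maps each functional $Q_{p,q}$ to some $Q_{p',q'}$ with the same threshold. Averaging a distribution with its image under $T$ would then give the second rationalization.
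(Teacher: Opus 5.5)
\textbf{Three prizes.} Your cancellation of the $ab$ terms is correct. The half-space $\{a d_1+b(d_1p_2-p_1d_2)<-d_2\}$ carries the same information the paper uses; in the paper's notation, $(a_1,b_1)=(a+b,a)$. But your coverage check is false, and the appeal to Cram\'er--Wold fails with it.

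The inequality is homogeneous in $d$. So with $d_1=\epsilon\delta$ and $d_2=-\epsilon$, the $b$-coefficient is $\delta p_2+p_1$, not $\delta p_2+p_1/\epsilon$. In general, the attainable events are $\{\delta a+cb<t\}$ with $c=p_1t+p_2\delta\in\mathrm{conv}\{0,\delta,t\}$. Monotonicity gives $0\le a\le 1$ and $0\le a+b\le 1$. Hence if $c>\max(0,\delta)$, then $t\ge c\ge \delta a+cb$ on the support. The case $c<\min(0,\delta)$ is symmetric. Those events therefore carry no information.

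Binary choice thus reveals only the laws of $a+\lambda b$ for $\lambda\in[0,1]$, which is a proper cone of directions, not all of them. The missing step is the paper's. Since the support is compact, these laws give $\E[e^{-r(a+\lambda b)}]$ for $r\ge 0$. Taking $r=s+t$ and $\lambda=s/(s+t)$ yields the Laplace transform $\E[e^{-s(a+b)-ta}]$ on the whole orthant $s,t\ge 0$, and this determines the joint law. Equivalently, the moment generating function is entire and is fixed by its values on an open cone.

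\textbf{Four or more prizes.} Your quadratic form is right, but there is no construction, and the one you sketch cannot work. Suppose $\nu$ and $\nu'$ rationalize the same RCC. The proof of Theorem~\ref{thm:EU} then shows they induce the same law of the normal $K(p)$ at every interior $p$.

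Now suppose both are finitely supported. Take two parameter vectors $(A,B)\neq(A',B')$ with the same normal on an open set of $p$. Then $V(p)=p\cdot K(p)$ would agree there too, so $A+vB=A'+vB'$ on a nondegenerate interval of $v$, which is impossible. Hence, for $p$ outside a nowhere dense set, all atoms of $\nu$ and $\nu'$ have distinct normals, and equality of the laws of $K(p)$ forces $\nu=\nu'$.

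So finitely supported random WEU preferences are always identified, for any number of prizes. No Fishburn-type swap, two-point or otherwise, can succeed: normals that are affine in the utility level leave no room to cut and reglue indifference maps as in Proposition~\ref{prop:BP}. Your involution fallback does not escape this, because it still needs $T_\#\nu$ to rationalize the same RCC as $\nu$, which is the original problem.

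The missing idea is a continuous perturbation. You need a nonzero $h$, compactly supported in the interior of the monotone parameter set $\Omega$, with $\int e^{\theta\cdot K(p|b,d)}h\,\mathrm{d}(b,d)=0$ for all $\theta$ and $p$, in coordinates $(b,d)=(A,B)$. The paper takes $h=L^2[\Psi]$ with $L=\partial_{b_1}\partial_{d_2}-\partial_{b_2}\partial_{d_1}$. It uses $\partial_{d_i}K=\lambda\,\partial_{b_i}K$ to show $L^2[e^{\theta\cdot K}]=0$, integrates by parts, and compares the densities $f$ and $f+\epsilon h$.
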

\begin{proof}[Sketch of Proof]
    Under WEU, all indifference sets are not only hyperplanes but also share a common intersection outside the simplex. Hence, a preference can be determined by two hyperplanes passing through the best prize $x_N$ and the worst prize $x_0$, respectively, since their intersection determines the common focal point and hence all indifference sets. In a three-prize setting, this geometric structure yields a two-dimensional parameterization $(a_1,b_1)$, where $(a_1,1)$ and $(b_1,1)$ are the corresponding normal vectors. Moreover, the normal vector associated with $\lambda x_2+(1-\lambda)x_0$ is given by $(\lambda a_1+(1-\lambda)b_1,1)$. Therefore, random choice identifies the distribution of $\lambda a_1+(1-\lambda)b_1$ for every $\lambda\in[0,1]$. Since preferences are assumed to be monotone in stochastic dominance, $a_1$ and $b_1$ are supported on $[0,1]$. A Laplace-transform argument then shows that the joint distribution of $a_1$ and $b_1$ is uniquely determined.
    \par
    However, this proof strategy does not extend to the four-prize setting. With four prizes, the analogous geometric parameterization is four-dimensional, with $(a_1,a_2,1)$ and $(b_1,b_2,1)$ as the corresponding normal vectors. Nevertheless, unlike in the three-prize setting, random choice no longer identifies the distribution of every convex combination of these four parameters. Hence, a Laplace-transform argument breaks down. Instead, we show that there exist nontrivial perturbations in the parameter space that leave the distributions of normal vectors unchanged. Specifically, by introducing the differential operator $$L=\partial_{b_1}\partial_{d_2}-\partial_{b_2}\partial_{d_1}$$
    where $d_i=a_i-b_i$, and suitably choosing a smooth function $\Psi$, we construct two distinct density functions, $f$ and $f+\epsilon L\left[L[\Psi]\right]$, with $\epsilon>0$ sufficiently small, that induce the same RCC.
\end{proof}

\section{Concluding Remarks}
This paper shows that the uniqueness property of random expected utility extends beyond expected utility, though not universally. Unique identification persists under the disappointment aversion model. It fails, however, when risk preferences are further relaxed to the betweenness class. Even under weighted expected utility, another notable subclass of betweenness, non-uniqueness obtains with four or more prizes. Our findings raise the question of what features of a preference model give rise to unique identification of random utility.
\par
One might conjecture that our positive results arise because both disappointment aversion and weighted expected utility in a three-prize setting are only one parameter richer than expected utility. Nevertheless, our proof of the disappointment aversion result uses only the distributions of normal vectors associated with lotteries over the best and second-best prizes, leaving much of the available random choice data unused. With additional parameters, information from other lotteries may also be exploited for identification, so their being one-parameter extensions of expected utility does not fully explain our uniqueness results. This perspective is further supported by \cite{Lu:2020}, who establishes unique identification for random maxmin expected utility, even though maxmin preferences are parameterized by an infinite-dimensional set of beliefs. These observations suggest that parameter dimension alone is unlikely to be decisive, and that the structural properties underlying unique identification deserve further investigation.

\appendix
\appendixpage
\section{Proofs}
\subsection{Proof of Theorem \ref{thm:DA}}
Consider lottery $p=ax_N+(1-a)x_{N-1}$ for some $a\in[0,1]$. Under a DA representation with normalized $u$, the lottery value $V(p)$ must lie in $[u_{N-1},1]$. So it solves
$$V(p)=\sum_{x\in X}\phi(x,V(p))p(x)=u_{N-1}(1-a)+\frac{1+\beta V(p)}{1+\beta}\times a.$$
Thus, 
$$V(p)=\frac{a+(1+\beta)(1-a)u_{N-1}}{1+\beta(1-a)}.$$
Then, the normal vector of the indifference set containing $p$ is given by
\begin{align*}
    &\left(\phi(x_1,V(p)),\cdots,\phi(x_N,V(p))\right)=\left(u_1,\cdots,u_{N-1},\frac{1+\beta(1-a)u_{N-1}}{1+\beta(1-a)}\right)\\
    &\propto\left(\frac{u_1[1+\beta(1-a)]}{1+\beta(1-a)u_{N-1}},\cdots,\frac{u_{N-1}[1+\beta(1-a)]}{1+\beta(1-a)u_{N-1}},1\right)\coloneqq(Y_{1,a},\cdots,Y_{N-1,a},1).
\end{align*}
\par
Let $\mu$ be the joint distribution of $(u_1,\cdots,u_{N-1},\beta)$, corresponding to the random DA preference that rationalizes $\rho$, where $0<u_1<\cdots<u_{N-1}<1$ and $\beta>-1$. By the proof of Theorem \ref{thm:EU} and the subsequent discussion, random choice identifies the distribution of the normal vector $(Y_{1,a},\cdots,Y_{N-1,a},1)$ for every $a\in[0,1]$.
\par
Define the following transformations:
$$Z_k\coloneqq\frac{u_k}{u_{N-1}}\ \forall\ k=1,\cdots,N-2;\ Z_{N-1}\coloneqq\frac{u_{N-1}}{1-u_{N-1}};\ Z_{N}\coloneqq \frac{(1+\beta)u_{N-1}}{1-u_{N-1}}.$$
Since $0<u_1<\cdots<u_{N-1}<1$ and $\beta>-1$, $Z_1,\cdots,Z_N$ are all positive.
\par
Observe that 
$$Z_k=\frac{Y_{k,a}}{Y_{N-1,a}}\ \forall\ k=1,\cdots,N-2,$$
and
$$\frac{Y_{N-1,a}}{1-Y_{N-1,a}}=Z_{N-1}(1+\beta(1-a))=aZ_{N-1}+(1-a)Z_{N}.$$
Hence, for each $a\in[0,1]$, the joint distribution of
$$\left(Z_1,\cdots,Z_{N-2},aZ_{N-1}+(1-a)Z_{N}\right)$$
is identified. 
\par
Now, we can compute
$$\mathbb{E}\left[\prod_{k=1}^{N-2}e^{-r_kZ_k}e^{-r_{N-1}(aZ_{N-1}+(1-a)Z_{N})}\right]$$
for all $r_1,\cdots,r_{N-1}\geq0$. Thus, for any nonnegative integers $m_1,\cdots,m_{N}$ with $m_{N-1}+m_{N}>0$, by taking $r_k=m_k$ for all $k=1,\cdots,N-2$, $r_{N-1}=m_{N-1}+m_{N}$, and $a=\frac{m_{N-1}}{m_{N-1}+m_{N}}$, we obtain 
$$\mathbb{E}\left[\prod_{k=1}^{N}e^{-m_kZ_k}\right].$$
If $m_{N-1}=m_N=0$, the same expression is obtained by taking $r_{N-1}=0$.
\par
Since $\left(e^{-Z_1},\cdots,e^{-Z_{N}}\right)$ has support contained in $[0,1]^{N}$, and since we have identified all of its joint moments, the distribution of $\left(e^{-Z_1},\cdots,e^{-Z_{N}}\right)$ is uniquely determined. Since the logarithm is one-to-one, we obtain the distribution of $(Z_1,\cdots,Z_{N})$.
\par
Finally, the variables $(u_1,\ldots,u_{N-1},\beta)$ can be recovered from $(Z_1,\cdots,Z_{N})$ through 
$$u_{k}=\frac{Z_kZ_{N-1}}{1+Z_{N-1}}\ \forall\ k=1,\cdots,N-2,\ u_{N-1}=\frac{Z_{N-1}}{1+Z_{N-1}},\ \text{and}\ \beta=\frac{Z_{N}}{Z_{N-1}}-1.$$ We therefore uniquely identify the distribution $\mu$ of $(u_1,\cdots,u_{N-1},\beta)$. Hence, $\rho$ cannot be rationalized by more than one random DA preference.


\subsection{Proof of Proposition \ref{prop:WEU}}
For any weighted expected utility representation
$$V(p)=\frac{\sum_{n=0}^Nu_nw_np_n}{\sum_{n=0}^Nw_np_n}$$
with normalization $u_0=0$ and $u_N=w_0=w_N=1$, the indifference set at utility level $v\in[0,1]$ is characterized by
\begin{align*}
    &\frac{\sum_{n=0}^Nu_nw_np_n}{\sum_{n=0}^Nw_np_n}=v\iff \sum_{n=0}^Nw_n\left(u_n-v\right)p_n=0\\
    &\iff\sum_{n=1}^N\left[w_nu_n+(1-w_n)v\right]p_n=v.
\end{align*}
Let $K(p)=(w_1u_1+(1-w_1)V(p),\cdots,w_{N-1}u_{N-1}+(1-w_{N-1})V(p),1)\in\R^N$. That is, $K(p)$ is the normal vector of the indifference set containing lottery $p$.

\paragraph{Uniqueness in the three-prize setting:}
\par\ \par
Now, suppose that $X=\{x_0,x_1,x_2\}$. Consider lotteries supported on $\{x_0,x_2\}$. We have for all $\lambda\in[0,1]$,
$$V(\lambda x_2+(1-\lambda)x_0)=\frac{u_0w_0(1-\lambda)+u_2w_2\lambda}{w_0(1-\lambda)+w_2\lambda}=\lambda.$$
It follows that
\begin{align*}
    &K(\lambda x_2+(1-\lambda)x_0)=(w_1u_1+(1-w_1)\lambda,1)\\
    &=\lambda(w_1u_1+1-w_1,1)+(1-\lambda)(w_1u_1,1)=\lambda K(x_2)+(1-\lambda)K(x_0).
\end{align*}
That is, the normal vector associated with $\lambda x_2+(1-\lambda)x_0$ is exactly the $\lambda$-weighted average of the normal vectors associated with $x_2$ and $x_0$ respectively. 
\par
Let $a_1=w_1u_1+1-w_1$ and $b_1=w_1u_1$. Monotonicity with respect to first-order stochastic dominance implies $a_1,b_1\in[0,1]$.  
\par
By the proof of Theorem \ref{thm:EU} and the subsequent discussion, random choice identifies the distribution of $K(\lambda x_2+(1-\lambda)x_0)$ for every $\lambda\in[0,1]$. Thus, we have the distribution of $\lambda a_1+(1-\lambda)b_1$ for all $\lambda\in[0,1]$. Now, we can compute $\E[\exp\{-r(\lambda a_1+(1-\lambda)b_1)\}]$ for all $\lambda\in[0,1]$ and $r\geq 0$. For any $s,t\geq 0$ with $s+t>0$, take $r=s+t$ and $\lambda=\frac{s}{s+t}$. We then recover $\E[\exp\{-sa_1-tb_1\}]$ for all $s,t\geq0$, i.e.,\ the Laplace transform of $(a_1,b_1)$. Since $(a_1,b_1)$ is supported on $[0,1]^2$, the Laplace transform uniquely determines its distribution. 
\par
Since $w_1=1-a_1+b_1$ and $u_1=\frac{b_1}{1-a_1+b_1}$, we can recover the distribution of $(u_1,w_1)$ uniquely. Therefore, in the three-prize setting, an RCC cannot be rationalized by more than one random WEU preference. 

\paragraph{Example of non-uniqueness with four or more prizes:}
\par\ \par
Suppose that $X=\{x_0,x_1,x_2,x_3\}$. We have
$$K(x_3)=(w_1u_1+1-w_1,w_2u_2+1-w_2,1)$$
and
$$K(x_0)=(w_1u_1,w_2u_2,1).$$
For all $n=1,2$, let 
$$b_n=w_nu_n\ \text{and}\ d_n=1-w_n.$$
Then the normal vector $K(p)$ can be expressed in terms of $b\coloneqq(b_1,b_2,1)$ and $d\coloneqq(d_1,d_2,0)$. Assume that $p\sim\lambda x_3+(1-\lambda)x_0$. Then
$$p\cdot K(p)=(\lambda x_3+(1-\lambda)x_0)\cdot K(p)=\lambda.$$
Moreover, $K(p)=K(\lambda x_3+(1-\lambda)x_0)=b+\lambda d$. Thus, $K(p)$ satisfies
$$K(p)=b+(p\cdot K(p))d.$$
Multiplying both sides by $p$ yields
$$p\cdot K(p)=p\cdot b+(p\cdot K(p))\times(p\cdot d).$$
Solving for $p\cdot K(p)$ from this equation, we have
$$\lambda=p\cdot K(p)=\frac{p\cdot b}{1-p\cdot d}.$$
It follows that
$$K(p)=b+\frac{p\cdot b}{1-p\cdot d}\times d.$$
\par
Now, we proceed to construct an example of non-uniqueness. Since $w_n=1-d_n$ and $u_n=\frac{b_n}{1-d_n}$, a random WEU preference can be represented by a distribution of $(b_1,b_2,d_1,d_2)$. Since preferences are monotone in first-order stochastic dominance, the parameter space is
$$\Omega\coloneqq\{(b_1,b_2,d_1,d_2):0<b_1<b_2<1,\ 0<b_1+d_1<b_2+d_2<1\}\subseteq\mathbb{R}^4.$$
Let $\lambda_{p,b,d}=\frac{p\cdot b}{1-p\cdot d}$. Since the last coordinate of $K(p)$ is always equal to 1, we focus on its first two coordinates and let $K(p|b,d)=(b_1+\lambda_{p,b,d}d_1,b_2+\lambda_{p,b,d}d_2)$.
\par
Define the differential operator:
$$L\coloneqq\partial_{b_1}\partial_{d_2}-\partial_{b_2}\partial_{d_1}.$$
Then
\begin{align*}
&L[b_1+\lambda_{p,b,d}d_1]=\frac{\partial}{\partial d_2}\left[1+\frac{d_1p_1}{1-p\cdot d}\right]-\frac{\partial}{\partial d_1}\left[\frac{d_1p_2}{1-p\cdot d}\right]\\
&=\frac{d_1p_1p_2}{(1-p\cdot d)^2}-\frac{p_2(1-p\cdot d)-(-p_1)d_1p_2}{(1-p\cdot d)^2}=-\frac{p_2}{1-p\cdot d}.
\end{align*}
and
\begin{align*}
&L[b_2+\lambda_{p,b,d}d_2]=\frac{\partial}{\partial d_2}\left[\frac{d_2p_1}{1-p\cdot d}\right]-\frac{\partial}{\partial d_1}\left[1+\frac{d_2p_2}{1-p\cdot d}\right]\\
&=\frac{p_1(1-p\cdot d)-(-p_2)d_2p_1}{(1-p\cdot d)^2}-\frac{d_2p_1p_2}{(1-p\cdot d)^2}=\frac{p_1}{1-p\cdot d}.
\end{align*}
So
$$L[K(p|b,d)]=\left(-\frac{p_2}{1-p\cdot d},\frac{p_1}{1-p\cdot d}\right).$$
Moreover,
$$\frac{\partial K(p|b,d)}{\partial d_1}=(p\cdot b)\times\left(\frac{1-p\cdot d+d_1p_1}{(1-p\cdot d)^2},\frac{d_2p_1}{(1-p\cdot d)^2}\right)=\lambda_{p,b,d}\times\frac{\partial K(p|b,d)}{\partial b_1}$$
and
$$\frac{\partial K(p|b,d)}{\partial d_2}=(p\cdot b)\times\left(\frac{d_1p_2}{(1-p\cdot d)^2},\frac{1-p\cdot d+d_2p_2}{(1-p\cdot d)^2}\right)=\lambda_{p,b,d}\times\frac{\partial K(p|b,d)}{\partial b_2}.$$
Now, take $\theta=(\theta_1,\theta_2)\in\mathbb{R}^2$. For notational convenience, write $K$ and $\lambda$ for $K(p|b,d)$ and $\lambda_{p,b,d}$, respectively. We have
\begin{align*}
    &L[\exp(\theta\cdot K)]=\frac{\partial}{\partial b_1}\left[e^{\theta\cdot K}\theta\cdot\frac{\partial K}{\partial d_2}\right]-\frac{\partial}{\partial b_2}\left[e^{\theta\cdot K}\theta\cdot\frac{\partial K}{\partial d_1}\right]\\
    =&e^{\theta\cdot K}\left\{\theta\cdot\frac{\partial^2K}{\partial b_1\partial d_2}+\left(\theta\cdot\frac{\partial K}{\partial b_1}\right)\left(\theta\cdot\frac{\partial K}{\partial d_2}\right)-\theta\cdot\frac{\partial^2K}{\partial b_2\partial d_1}-\left(\theta\cdot\frac{\partial K}{\partial b_2}\right)\left(\theta\cdot\frac{\partial K}{\partial d_1}\right)\right\}\\
    =&e^{\theta\cdot K}\left\{\theta\cdot L[K]+\lambda\left(\theta\cdot\frac{\partial K}{\partial b_1}\right)\left(\theta\cdot\frac{\partial K}{\partial b_2}\right)-\lambda\left(\theta\cdot\frac{\partial K}{\partial b_2}\right)\left(\theta\cdot\frac{\partial K}{\partial b_1}\right)\right\}\\
    =&e^{\theta\cdot K}\theta\cdot L[K]=e^{\theta\cdot K}\times\frac{\theta_2p_1-\theta_1p_2}{1-p\cdot d}.
\end{align*}
Note that
\begin{align*}
    &L\left[\frac{e^{\theta\cdot K}}{1-p\cdot d}\right]\\
    =&\frac{\partial}{\partial b_1}\left[\frac{e^{\theta\cdot K}\theta\cdot\frac{\partial K}{\partial d_2}(1-p\cdot d)+p_2e^{\theta\cdot K}}{(1-p\cdot d)^2}\right]-\frac{\partial}{\partial b_2}\left[\frac{e^{\theta\cdot K}\theta\cdot\frac{\partial K}{\partial d_1}(1-p\cdot d)+p_1e^{\theta\cdot K}}{(1-p\cdot d)^2}\right]\\
    =&\frac{L[\exp(\theta\cdot K)]}{1-p\cdot d}+\frac{p_2e^{\theta\cdot K}\theta\cdot\frac{\partial K}{\partial b_1}}{(1-p\cdot d)^2}-\frac{p_1e^{\theta\cdot K}\theta\cdot\frac{\partial K}{\partial b_2}}{(1-p\cdot d)^2}\\
    =&\frac{e^{\theta\cdot K}}{(1-p\cdot d)^2}\left\{\theta_2p_1-\theta_1p_2+\theta\cdot\left(\frac{p_2(1-p\cdot d)}{1-p\cdot d},\frac{-p_1(1-p\cdot d)}{1-p\cdot d}\right)\right\}\\
    =&\frac{e^{\theta\cdot K}}{(1-p\cdot d)^2}\times0=0.
\end{align*}
Hence, where $L^2$ denotes $L\circ L$, 
$$L^2[\exp(\theta\cdot K(p|b,d))]=L\left[e^{\theta\cdot K(p|b,d)}\times\frac{\theta_2p_1-\theta_1p_2}{1-p\cdot d}\right]=0.$$
\par
Let $\Psi$ be a real-valued function on $\mathbb{R}^4$ such that (i) $\Psi$ is infinitely differentiable, (ii) $\Psi$ has a compact support that is a subset of $\Omega$, and (iii) $L^2[\Psi]\not\equiv 0$. Such a function exists because $L^2$ is a non-zero differential operator and $\Omega$ has nonempty interior.
\par
Let $$h=L^2[\Psi].$$
Now, we argue that
$$\int_\Omega e^{\theta\cdot K(p|b,d)}h(b,d)\ \mathrm{d}(b,d)=0.$$
Note that
$$\int_\Omega e^{\theta\cdot K(p|b,d)}h(b,d)=\int_\Omega e^{\theta\cdot K(p|b,d)}\left(\partial_{b_1}\partial_{d_2}L[\Psi]-\partial_{b_2}\partial_{d_1}L[\Psi]\right).$$
Fixing $(b_2,d_1,d_2)$, integrating by parts yields
$$\int_\mathbb{R}e^{\theta\cdot K}\frac{\partial^2L[\Psi]}{\partial b_1\partial d_2}\mathrm{d}b_1=e^{\theta\cdot K}\frac{\partial L[\Psi]}{\partial d_2}\bigg|_{b_1\downarrow-\infty}^{b_1\uparrow\infty}-\int_\mathbb{R}\frac{\partial L[\Psi]}{\partial d_2}\frac{\partial e^{\theta\cdot K}}{\partial b_1}=-\int_\mathbb{R}\frac{\partial L[\Psi]}{\partial d_2}\frac{\partial e^{\theta\cdot K}}{\partial b_1},$$
where the boundary terms vanish since $\Psi$ has compact support contained in the interior of $\Omega$. Hence, by integrating by parts twice, we have
$$\int_\Omega e^{\theta\cdot K}\times\frac{\partial^2L[\Psi]}{\partial b_1\partial d_2}=-\int_\Omega\frac{\partial L[\Psi]}{\partial d_2}\frac{\partial e^{\theta\cdot K}}{\partial b_1}=\int_\Omega L[\Psi]\times\frac{\partial^2e^{\theta\cdot K}}{\partial b_1\partial d_2}.$$
Similarly,
$$\int_\Omega e^{\theta\cdot K}\times\frac{\partial^2L[\Psi]}{\partial b_2\partial d_1}=-\int_\Omega\frac{\partial L[\Psi]}{\partial d_1}\frac{\partial e^{\theta\cdot K}}{\partial b_2}=\int_\Omega L[\Psi]\times\frac{\partial^2e^{\theta\cdot K}}{\partial b_2\partial d_1}.$$
Thus,
\begin{align*}
&\int_\Omega e^{\theta\cdot K}\times\left(\frac{\partial^2 L[\Psi]}{\partial b_1\partial d_2}-\frac{\partial^2L[\Psi]}{\partial b_2\partial d_1}\right)=\int_\Omega L[\Psi]\times\left(\frac{\partial^2e^{\theta\cdot K}}{\partial b_1\partial d_2}-\frac{\partial^2e^{\theta\cdot K}}{\partial b_2\partial d_1}\right)\\
&=\int_\Omega L[e^{\theta\cdot K}]\left(\partial_{b_1}\partial_{d_2}\Psi-\partial_{b_2}\partial_{d_1}\Psi\right).
\end{align*}
By the same trick of integration by parts, we have
$$\int_\Omega L[e^{\theta\cdot K(p|b,d)}]\left(\partial_{b_1}\partial_{d_2}\Psi-\partial_{b_2}\partial_{d_1}\Psi\right)=\int_\Omega\Psi\times L^2[e^{\theta\cdot K(p|b,d)}].$$
Therefore,
$$\int_\Omega e^{\theta\cdot K(p|b,d)}h(b,d)=\int_\Omega\Psi\times L^2[e^{\theta\cdot K(p|b,d)}]=\int_\Omega\Psi\times0=0.$$
In particular, by taking $\theta=(0,0)$, we have $$\int_\Omega h=0.$$
\par
Let $f$ be a density function on $\Omega$ such that $f$ is bounded below by a positive number on the support of $\Psi$. Since $h=L^2[\Psi]\not\equiv0$, $f+\epsilon h\neq f$ whenever $\epsilon>0$, and $f+\epsilon h\geq 0$ on $\Omega$ for $\epsilon>0$ small enough. Since 
$$\int_\Omega(f+\epsilon h)=\int_\Omega f+\epsilon\int_\Omega h=1+0=1,$$ 
$f+\epsilon h$ is also a density function on $\Omega$. Moreover,
$$\int_\Omega e^{\theta\cdot K(p|b,d)}[f(b,d)+\epsilon h(b,d)]=\int_\Omega e^{\theta\cdot K(p|b,d)}f(b,d)+\epsilon\times0=\int_\Omega e^{\theta\cdot K(p|b,d)}f(b,d).$$
Thus, $f$ and $f+\epsilon h$ induce the same Laplace transform of $K(p|b,d)$. Since $K(p|b,d)\in [0,1]^2$, a distribution of $K(p|b,d)$ has a compact support. Hence, equality of Laplace transforms implies equality of distributions. In other words, for every lottery $p\in\Delta$, the two densities induce the same distribution of the normal vector of the indifference set containing $p$. Therefore, they induce the same RCC. 
\par
When $|X|>4$, we can extend the construction of this four-prize counterexample by keeping the utility and weight parameters of the additional prizes fixed. As before, write $b=(b_1,b_2,\cdots,b_{N-1},1)$ and $d=(d_1,d_2,\cdots,d_{N-1},0)$ with $b_n=w_nu_n$ and $d_n=1-w_n$ for all $n=1,\cdots,N-1$. Let $K_n(p|b,d)=b_n+\lambda_{p,b,d}d_n$. Now, $K(p|b,d)=(K_n(p|b,d))_{n=1}^{N-1}\in\R^{N-1}$, and the normal vector $K(p)=(K(p|b,d),1)$. Again, take the differential operator $L=\partial_{b_1}\partial_{d_2}-\partial_{b_2}\partial_{d_1}$.
\par
One can verify that 
$$\frac{\partial K_n(p|b,d)}{\partial d_i}=\lambda_{p,b,d}\times\frac{\partial K_n(p|b,d)}{\partial b_i}$$
for all $n=1,\cdots,N-1$ and $i=1,2$. Note that $L[\lambda_{p,b,d}]=0$. Thus, for $n\geq 3$, $L[K_n(p|b,d)]=d_nL[\lambda_{p,b,d}]=0$. Let $\theta\in\R^{N-1}$. Then, by the same calculation as before,
$$L[\exp(\theta\cdot K(p|b,d))]=e^{\theta\cdot K(p|b,d)}\theta\cdot L[K(p|b,d)]=e^{\theta\cdot K(p|b,d)}\times\frac{\theta_2p_1-\theta_1p_2}{1-p\cdot d}.$$
Moreover, one can also establish $L\left[\frac{e^{\theta\cdot K(p|b,d)}}{1-p\cdot d}\right]=0$ by a similar calculation as in the four-prize setting. It follows that $L^2[\exp(\theta\cdot K(p|b,d))]=0$.
\par
Fix $b_n,d_n$ for $n\geq 3$ (i.e., they are assumed non-random). Then a random WEU remains represented by a distribution of $(b_1,b_2,d_1,d_2)$ supported on
$$\{(b_1,b_2,d_1,d_2):0<b_1<b_2<b_3,\ 0<b_1+d_1<b_2+d_2<b_3+d_3\}\subseteq\mathbb{R}^4.$$
Pick $\Psi:\R^4\to\R$ as before and let $h=L^2[\Psi]$. The integration-by-parts argument then proceeds exactly as in the four-prize case. Thus, we can construct two distinct densities that induce the same RCC.

\linespread{1}\selectfont
\bibliographystyle{apalike}
\bibliography{RNEU}

@article{Lu:2020,
author = {Lu, Jay},
title = {Random ambiguity},
journal = {Theoretical Economics},
volume = {16},
number = {2},
pages = {539-570},
doi = {https://doi.org/10.3982/TE3810},
url = {https://onlinelibrary.wiley.com/doi/abs/10.3982/TE3810},
eprint = {https://onlinelibrary.wiley.com/doi/pdf/10.3982/TE3810},
year = {2021}
}

@article{Chew:1989,
	Author = {Soo Hong Chew},
	Journal = {Annals of Operations Research},
	Pages = {273-298},
	Title = {Axiomatic utility theories with the betweenness property},
	Volume = {19},
	Year = {1989}}

@article{Gul:2006,
	Author = {Faruk Gul and Wolfgang Pesendorfer},
	Journal = {Econometrica},
	Number = {1},
	Pages = {121-146},
	Title = {Random expected utility},
	Volume = {74},
	Year = {2006}}

@article{Chew:1983,
	Author = {Soo Hong Chew},
	Journal = {Econometrica},
	Number = {4},
	Pages = {1065-1092},
	Title = {A Generalization of the Quasilinear Mean with Applications to the Measurement of Income Inequality and Decision Theory Resolving the Allais Paradox},
	Volume = {51},
	Year = {1983}}

@article{Gul:1991,
	Author = {Faruk Gul},
	Journal = {Econometrica},
	Number = {3},
	Pages = {667-686},
	Title = {A Theory of Disappointment Aversion},
	Volume = {59},
	Year = {1991}}

@article{Dekel:1986,
	Author = {Eddie Dekel},
	Journal = {Journal of Economic Theory},
	Number = {2},
	Pages = {304-318},
	Title = {An axiomatic characterization of preferences under uncertainty: Weakening the independence axiom},
	Volume = {40},
	Year = {1986}}

@incollection{Fishburn1998,
  author    = {Fishburn, Peter C.},
  title     = {Stochastic Utility},
  editor     = {Barber\`a, Salvador and Hammond, Peter J. and Seidl, Christian},
  booktitle = {Handbook of Utility Theory, Volume 1: Principles},
  pages      = {273--319},
  year       = {1998},
  publisher  = {Kluwer Academic Publishers},
  address    = {Dordrecht}}

@article{SULEYMANOV2024,
title = {Branching-independent random utility model},
journal = {Journal of Economic Theory},
volume = {220},
pages = {105880},
year = {2024},
issn = {0022-0531},
doi = {https://doi.org/10.1016/j.jet.2024.105880},
url = {https://www.sciencedirect.com/science/article/pii/S0022053124000863},
author = {Elchin Suleymanov}
}

@article{TURANSICK2022,
title = {Identification in the random utility model},
journal = {Journal of Economic Theory},
volume = {203},
pages = {105489},
year = {2022},
issn = {0022-0531},
doi = {https://doi.org/10.1016/j.jet.2022.105489},
url = {https://www.sciencedirect.com/science/article/pii/S0022053122000795},
author = {Christopher Turansick}
}

@unpublished{DuttaMasatliogluMu2026,
  author = {Dutta, Rohan and Masatlioglu, Yusufcan and Mu, Xiaosheng},
  title = {Binary Choice Paths and Random Utility},
  year = {2026},
  note = {SSRN Working Paper},
  url = {https://ssrn.com/abstract=6177358}
}

@incollection{BlockMarschak1960,
  author    = {Block, H. David and Marschak, Jacob},
  title     = {Random Orderings and Stochastic Theories of Responses},
  booktitle = {Contributions to Probability and Statistics: Essays in Honor of Harold Hotelling},
  editor    = {Olkin, Ingram},
  pages     = {97--132},
  publisher = {Stanford University Press},
  year      = {1960}
}

@article{single-crossing-RUM,
 ISSN = {00129682, 14680262},
 URL = {http://www.jstor.org/stable/44955133},
 author = {Jose Apesteguia and Miguel A. Ballester and Jay Lu},
 journal = {Econometrica},
 number = {2},
 pages = {661--674},
 publisher = {[Wiley, The Econometric Society]},
 title = {SINGLE-CROSSING RANDOM UTILITY MODELS},
 urldate = {2026-06-23},
 volume = {85},
 year = {2017}
}

@article{YANG2023,
title = {Random quasi-linear utility},
journal = {Journal of Economic Theory},
volume = {209},
pages = {105650},
year = {2023},
issn = {0022-0531},
doi = {https://doi.org/10.1016/j.jet.2023.105650},
url = {https://www.sciencedirect.com/science/article/pii/S0022053123000467},
author = {Erya Yang and Igor Kopylov}
}

@article{Lu:2016,
author = {Lu, Jay},
title = {Random Choice and Private Information},
journal = {Econometrica},
volume = {84},
number = {6},
pages = {1983-2027},
doi = {https://doi.org/10.3982/ECTA12821},
url = {https://onlinelibrary.wiley.com/doi/abs/10.3982/ECTA12821},
eprint = {https://onlinelibrary.wiley.com/doi/pdf/10.3982/ECTA12821},
year = {2016}
}

@book{Billingsley1995,
  author    = {Patrick Billingsley},
  title     = {Probability and Measure},
  edition   = {3},
  publisher = {John Wiley \& Sons},
  year      = {1995}
}

\end{NoHyper}\end{document}